\newtheorem{Theorem}{Theorem}[section]
\newtheorem{lem}[Theorem]{Lemma}
\newtheorem{Corollary}[Theorem]{Corollary}
\newtheorem{Example}[Theorem]{Example}
\numberwithin{equation}{section}
\begin{document}
\title{Constacyclic symbol-pair codes: lower bounds and optimal constructions\footnote{
 E-Mail addresses: bocong\_chen@yahoo.com (B. Chen), L\_R\_Lin86@163.com (L. Lin),  hwliu@mail.ccnu.edu.cn (H. Liu).}}

\author{Bocong Chen$^1$, ~Liren Lin$^2$,  ~Hongwei Liu$^2$}

\date{\small
${}^1$School of Mathematics, South China University of Technology, Guangzhou,
Guangdong, 510641, China\\
${}^2$School of Mathematics and Statistics,
Central China Normal University,
Wuhan, Hubei, 430079, China\\         }

\maketitle

\begin{abstract}
Symbol-pair codes introduced by  Cassuto and Blaum (2010) are designed  to
protect   against pair errors  in symbol-pair read channels.
The higher
the minimum pair distance, the more  pair errors the code can correct.
MDS symbol-pair codes are optimal in the sense that
 pair distance cannot be improved for given length and code size.
The  contribution of this paper is twofold.
First we
present three lower bounds for the minimum pair distance
of constacyclic codes, the first two of which generalize the previously known results
due to Cassuto and Blaum (2011) and Kai {\it et al.} (2015).
The third one exhibits  a lower bound for the minimum pair distance of repeated-root cyclic codes.
Second we
obtain new MDS symbol-pair codes  with minimum pair distance seven and eight through repeated-root cyclic codes.

\medskip
\textbf{Keywords:} Symbol-pair read channel, symbol-pair code,  MDS symbol-pair code, repeated-root cyclic code.

\medskip
\textbf{2010 Mathematics Subject Classification:}~94B05,  94B15.
\end{abstract}

\section{Introduction}
Let $\Sigma$ be a set of size $q$,
which we refer to
as an  {\it  alphabet } and whose elements are called {\it  symbols}. A {\it $q$-ary code} $\mathcal{C}$ of length $n$
over $\Sigma$ is a nonempty subset of $\Sigma^n$.
For any vector $\mathbf{a}=(a_0, a_1, \cdots, a_{n-1})\in\Sigma^n$,  the {\it symbol-pair read vector} of $\mathbf{a}$
is defined to be
$$
\pi(\mathbf{a})=\big[(a_0,a_1), (a_1,a_2), \cdots, (a_{n-2}, a_{n-1}), (a_{n-1}, a_0)\big].
$$
Two pairs $(c,d)$ and $(e,f)$ are distinct  if $c\neq e$ or $d\neq f$, or both.
The {\it pair distance} between $\mathbf{a}$ and $\mathbf{b}$, denoted by $d_p(\mathbf{a},\mathbf{b})$,
is defined as
$$
d_p\big(\mathbf{a},\mathbf{b}\big)=d_H\big(\pi(\mathbf{a}),\pi(\mathbf{b})\big),
$$
where $d_H$ denotes the usual Hamming distance.
It turns out that the set $\Sigma^n$ equipped with the pair distance $d_p$ is indeed a {\it metric space} (see \cite{Cassuto11}).
In a similar way to Hamming-metric codes, the {\it minimum pair distance} of a code $\mathcal{C}$ is defined to be
$$
d_p\big(\mathcal{C}\big)=\min\big\{d_p\big(\mathbf{a},\mathbf{b}\big)\,\big{|}\,\mathbf{a},\mathbf{b}\in \mathcal{C}, \mathbf{a}\neq \mathbf{b}\big\}.
$$
For any code $\mathcal{C}$ of length $n$ with $0<d_H(\mathcal{C})<n$, a simple but important
connection between $d_H(\mathcal{C})$ and $d_p(\mathcal{C})$ is  given in \cite{Cassuto11}:
$d_H(\mathcal{C})+1\leq d_p(\mathcal{C})\leq 2d_H(\mathcal{C})$.
A code of length $n$ over $\Sigma$
is called an {\it $(n,M,d_p)_q$-symbol-pair code} if its size is $M$ and minimum pair distance is $d_p$.

Symbol-pair codes introduced by  Cassuto and Blaum \cite{Cassuto10,Cassuto11}
are designed  to  protect   against pair errors  in symbol-pair read channels, where the outputs are overlapping pairs of symbols.
The   seminal works \cite{Cassuto10,Cassuto11,CL11} have
established   relationships between the minimum Hamming distance
of an error-correcting code and the minimum pair distance,
have found methods for  code constructions
and decoding,  and have obtained  lower and upper bounds on code sizes.
It was shown in \cite{Cassuto11} that if a code has minimum pair distance $d_p$ then it
can correct up to $\lfloor (d_p-1)/2\rfloor$ symbol-pair errors.
For this reason, it is desirable  to construct symbol-pair codes having a large minimum pair distance.

For a fixed code length $n$, it would certainly be nice if both  the code size $M$ (which is a measure of the
efficiency of the code) and the minimum pair distance $d_p$
could be as large as possible. However,
as in the Hamming-metric  case,
these two parameters are restricted each other for any
fixed length.
The Singleton-type Bound
for symbol-pair codes  relates the parameters $n$, $M$ and $d_p$
(see \cite[Theorem 2.1]{Chee}):
If $\mathcal{C}$ is an
$(n,M,d_p)_q$-symbol-pair code with $q\geq2$ and $2\leq d_p\leq n$, then
\begin{equation}\label{singleton}
M\leq q^{n-d_p+2}.
\end{equation}
A symbol-pair code for which equality holds in (\ref{singleton}) is said to be
{\it maximum distance separable} (MDS).
In this case, the code size $M$ is fully determined  by $n, d_p$ and $q$.
Following \cite{Chee}, we use   $(n,d_p)_q$ to denote an MDS symbol-pair code  of length
$n$ over $\Sigma$ with minimum pair distance $d_p$ and size $M=q^{n-d_p+2}$.
MDS symbol-pair codes are optimal in the sense that
 no code of length $n$
with $M$ codewords has a larger minimum pair distance than an MDS symbol-pair code with parameters $n$
and $M$. Constructing MDS symbol-pair codes is thus of
significance in theory and practice.

Cassuto and Blaum  \cite{Cassuto11} studied how the class of   cyclic codes can be exploited
as a framework for  symbol-pair  codes.
Combining  the  discrete Fourier transform (DFT) with the BCH Bound,
\cite[Theorem 10]{Cassuto11} showed that if  the generator polynomial of a  simple-root $[n,k,d_H]$ cyclic code
has at least $d_H$ roots (in some extension field over $\mathbb{F}_q$),
then the minimum pair distance of the code is at least $d_H+2$.
Using the Hartmann-Tzeng Bound, this lower bound was improved to $d_H+3$ when the code length $n$ is a prime number and
a constraint condition  on $n,k$ and $d_H$ is assumed (see \cite[Theorem 11]{Cassuto11}).
In a follow-up paper \cite{Kai},  Kai {\it et al.}     showed  that
\cite[Theorem 10]{Cassuto11} can be generalized to  simple-root constacyclic codes:
If  the generator polynomial of a  simple-root $[n,k,d_H]$ constacyclic code
has at least $d_H$ roots,
then the minimum pair distance of the code is at least $d_H+2$ (see \cite[Lemma 4.1]{Kai}).
Recently, Yaakobi {\it et al.}  \cite[Theorem 4]{Yaakobi}
obtained an elegant result on the minimum pair distance of binary   cyclic codes:
If  $\mathcal{C}$ is a binary   cyclic code of dimension greater than one, then
$
d_p(\mathcal{C})\geq d_H(\mathcal{C})+\lceil\frac{d_H(\mathcal{C})}{2}\rceil.
$

After establishing the  Singleton Bound (\ref{singleton}) for symbol-pair codes,
Chee  {\it et al.} \cite{Chee,Chee12} employed various methods to construct   MDS symbol-pair codes, including
the use of classical MDS codes, interleaving method of Cassuto and Blaum \cite{Cassuto11},  and
eulerian graphs of certain girth, etc.  It is worth noting that in contrast with all known classical MDS codes,
of which the lengths  are so small with respect to the alphabet
size,  MDS symbol-pair codes can have relatively large code length (see \cite{Chee}). In the light of
the Singleton Bound (\ref{singleton}) and
\cite[Lemma 4.1]{Kai},
Kai {\it et al.}  \cite{Kai} used almost MDS  constacyclic codes to construct MDS symbol-pair codes;
several classes of almost MDS constacyclic codes with minimum Hamming distance three or four
are constructed, and,  consequently,  MDS symbol-pair codes
with minimum pair distance five or six are obtained.

The aforementioned works lead us to the study of  lower bounds for the minimum pair distance of   constacyclic codes
and   constructions  of MDS symbol-pair codes.
The  contribution of this paper is twofold.
First we
present three lower bounds for the minimum pair distance
of constacyclic codes, the first two of which generalize the previously known results
\cite[Theorem 10]{Cassuto11}, \cite[Theorem 11]{Cassuto11} and \cite[Lemma 4.1]{Kai}.
The third one exhibits  a lower bound for the minimum pair distance of repeated-root cyclic codes.
Second we
construct new MDS symbol-pair codes  with minimum pair distance seven and eight by using repeated-root cyclic codes.
More precisely,
we summarize our results as follows.

Thereafter, $\mathbb{F}_q$ denotes a finite field of size $q$, where $q$ is a power of a prime number $p$.
Let $n>1$ be a positive integer ($n$ and $p$ are not necessarily co-prime).
\begin{Theorem}\label{add2}
Let $\mathcal{C}$ be an $[n,k,d_H]$  constacyclic code over  $\mathbb{F}_q$ with $2\leq d_H<n$. Then  we have  the following.
\begin{itemize}
\item[$(1)$]
$d_p(\mathcal{C})\geq d_H+2$ if and only if $\mathcal{C}$ is not an MDS code, i.e.,  $k<n-d_H+1$.
Equivalently, $d_p(\mathcal{C})=d_H+1$  if and only if $\mathcal{C}$ is   an MDS code, i.e.,  $k=n-d_H+1$.

\item[$(2)$]
If $k>1$ and $n-d_H\geq 2k-1$, then $d_p(\mathcal{C})\geq d_H+3$.

\end{itemize}
\end{Theorem}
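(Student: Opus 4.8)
The plan is to work in the ring $R=\mathbb F_q[x]/(x^n-\lambda)$, identifying a length-$n$ vector with the polynomial of degree $<n$ having those coordinates as coefficients, so that $\mathcal C=\langle g(x)\rangle$ for some $g(x)\mid x^n-\lambda$ with $\deg g=n-k$, and recalling that, because $g(x)\mid x^n-\lambda$, a polynomial of degree $<n$ belongs to $\mathcal C$ exactly when $g(x)$ divides it in $\mathbb F_q[x]$. Two ingredients drive the argument. The first is a pair-weight formula: for $\mathbf c\neq\mathbf 0$ with $w(\mathbf c)<n$, whose support splits into $\ell(\mathbf c)$ maximal cyclic runs, one has $w_p(\mathbf c)=w(\mathbf c)+\ell(\mathbf c)$ (the pair-support is the support together with the coordinate immediately preceding each run), while $w_p(\mathbf c)=n$ when $w(\mathbf c)=n$. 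The second, which is the heart of the matter, is the claim that \emph{every nonzero codeword of an $[n,k]$ $\lambda$-constacyclic code has at most $k-1$ cyclically consecutive zero coordinates}: if a nonzero $\mathbf c\in\mathcal C$ had $t$ consecutive zeros, then iterating the constacyclic shift $c(x)\mapsto x\,c(x)\bmod(x^n-\lambda)$ (a bijection of $\mathcal C$ that rotates coordinates cyclically) produces a nonzero $\mathbf c'\in\mathcal C$ with $c'_{n-t}=\cdots=c'_{n-1}=0$, hence $\deg c'(x)\le n-t-1$; since $g(x)\mid c'(x)$ this forces $n-k=\deg g\le n-t-1$, i.e. $t\le k-1$.

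With these in hand, part~(1) goes as follows. If $\mathcal C$ is MDS, then $\deg g=n-k=d_H-1$, so the codeword $g(x)$ has Hamming weight at most $d_H$ and therefore exactly $d_H$; hence all $d_H$ coefficients of $g(x)$ are nonzero, its support is the single run $\{0,1,\dots,d_H-1\}$, and $w_p(g)=d_H+1$, so $d_p(\mathcal C)=d_H+1$ in view of the general lower bound $d_p\ge d_H+1$. (Alternatively, one may quote the Singleton-type bound~(\ref{singleton}) with $M=q^k$.) Conversely, suppose $d_p(\mathcal C)=d_H+1$ and choose a codeword $\mathbf c$ with $w_p(\mathbf c)=d_H+1$. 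If $w(\mathbf c)=n$ then $d_H=n-1$ and any minimum-weight codeword has single-run support; otherwise $w(\mathbf c)+\ell(\mathbf c)=d_H+1$ with $w(\mathbf c)\ge d_H$ forces $w(\mathbf c)=d_H$, $\ell(\mathbf c)=1$. In either case there is a weight-$d_H$ codeword whose complement is a single run of $n-d_H$ zeros, so the heart-of-the-matter claim gives $n-d_H\le k-1$; combined with the Singleton bound $k\le n-d_H+1$ this yields $k=n-d_H+1$, i.e. $\mathcal C$ is MDS.

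For part~(2), assume $k>1$ and $n-d_H\ge 2k-1$; then $n-d_H\ge 3$ and $k<n-d_H+1$, so $\mathcal C$ is not MDS and part~(1) already gives $d_p(\mathcal C)\ge d_H+2$. Suppose, for contradiction, that $d_p(\mathcal C)=d_H+2$, and take $\mathbf c$ with $w_p(\mathbf c)=d_H+2$. A weight-$n$ codeword has $w_p=n\ge d_H+3$, so $w(\mathbf c)<n$, and $w(\mathbf c)+\ell(\mathbf c)=d_H+2$ with $w(\mathbf c)\ge d_H$ leaves exactly two cases. If $w(\mathbf c)=d_H$ and $\ell(\mathbf c)=2$, the support-complement consists of two runs of zeros of total length $n-d_H\ge 2k-1$, so the longer one has length at least $k$ --- contradicting the heart-of-the-matter claim. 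If $w(\mathbf c)=d_H+1$ and $\ell(\mathbf c)=1$, the support-complement is a single run of $n-d_H-1$ zeros, so $n-d_H-1\le k-1$, whence $2k-1\le n-d_H\le k$ and $k\le 1$, again a contradiction. Therefore $d_p(\mathcal C)\ge d_H+3$.

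I expect the genuinely substantive step to be the claim that a nonzero codeword of an $[n,k]$ constacyclic code cannot have $k$ consecutive zero coordinates; the rest is the pair-weight bookkeeping (in particular, the small case distinction isolating codewords of Hamming weight $<n$) followed by the short degree-versus-Singleton computations displayed above.
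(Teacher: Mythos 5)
Your proof is correct, and it takes a recognizably different route from the paper's on the key step of part (1). Both arguments reduce the question to the run structure of supports of low-weight codewords, and your part (2) ultimately rests on the same degree observation the paper uses (a nonzero codeword divisible by $g(x)$ cannot end in $k$ or more zero coordinates, hence cannot have $k$ cyclically consecutive zeros after shifting). The divergence is in the converse of part (1): the paper shows that a weight-$d_H$ codeword with consecutive support forces, via repeated constacyclic shifts, all columns of a parity-check matrix into a $(d_H-1)$-dimensional space, so that $n-k=d_H-1$ directly; you instead apply your zero-run lemma to get $n-d_H\le k-1$ and then close the gap with the classical Singleton bound $k\le n-d_H+1$. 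The paper's argument is self-contained (no appeal to Singleton), while yours buys uniformity: the single lemma ``a nonzero codeword of an $[n,k]$ constacyclic code has at most $k-1$ cyclically consecutive zeros'' drives part (1), the analogue of Corollary \ref{corollary}, and both subcases of part (2), and your explicit use of the pair-weight formula $w_p(\mathbf{c})=w(\mathbf{c})+\ell(\mathbf{c})$ (with the $w(\mathbf{c})=n$ edge case handled separately) makes the case analysis cleaner than the paper's description of codewords of the form $(\mathbf{a},\mathbf{0}_r,\mathbf{b},\mathbf{0}_s)$. I find no gaps in your argument.
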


\begin{Theorem}\label{re-add3}
Let $\mathcal{D}$ be a nonzero $[\ell p^e, k, d_H]$ repeated-root cyclic code  over $\mathbb{F}_q$
with generator polynomial $g(x)$, where $\ell>1$ is a positive integer
co-prime to $p$ and $e$ is a positive integer.
If $d_H(\mathcal{D})$ is a prime number and if one of the following two conditions is satisfied
\begin{itemize}
\item[$(1)$]
 $\ell<d_H(\mathcal{D})<\ell p^e-k$;
 \item[$(2)$]
  $x^\ell-1$ is a divisor of $g(x)$ and $2<d_H(\mathcal{D})<\ell p^e-k$,
\end{itemize}
then
$d_p(\mathcal{D})\geq d_H(\mathcal{D})+3$.
\end{Theorem}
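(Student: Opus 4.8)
The plan is to reduce the claim to a statement about codewords of pair‑weight exactly $d+2$ (where $d:=d_H(\mathcal{D})$, $n:=\ell p^{e}$), and then to control their "run structure''. Since $\mathcal{D}$ is linear, $d_p(\mathcal{D})$ is the least pair‑weight $w_p(\mathbf{c})$ over nonzero $\mathbf{c}\in\mathcal{D}$, where $w_p(\mathbf{c})$ counts the indices $i$ with $(c_i,c_{i+1})\neq(0,0)$ (indices mod $n$). A direct count gives, for every $\mathbf{c}$ with $0<w_H(\mathbf{c})<n$, the identity $w_p(\mathbf{c})=w_H(\mathbf{c})+t(\mathbf{c})$, where $t(\mathbf{c})\ge 1$ is the number of cyclic runs of consecutive nonzero coordinates of $\mathbf{c}$. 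Both hypotheses force $d<n-k$, so $\mathcal{D}$ is not MDS and Theorem~\ref{add2}(1) gives $d_p(\mathcal{D})\ge d+2$; hence it suffices to show that no nonzero $\mathbf{c}\in\mathcal{D}$ has $w_p(\mathbf{c})=d+2$.

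Suppose $\mathbf{c}$ is such a codeword. From $w_p=w_H+t$, $w_H\ge d$, $t\ge1$ the only possibilities are $(w_H(\mathbf{c}),t(\mathbf{c}))=(d+1,1)$ or $(d,2)$ (the case $w_H=n$ is easily excluded under the hypotheses). The case $(d+1,1)$ is impossible: after a cyclic shift the support is $\{0,1,\dots,w_H-1\}$, so $c(x)$ is a polynomial of degree $w_H-1<n$ divisible by $g(x)$, whence $w_H-1\ge\deg g=n-k$, contradicting $w_H=d+1\le n-k$. (The same argument shows every single‑run codeword has weight $\ge n-k+1>d+1$.) So it remains to rule out that $\mathbf{c}$ is a minimum‑weight codeword with exactly two cyclic runs. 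After a cyclic shift we may take the runs to be $\{0,\dots,a_1-1\}$ and $\{b,\dots,b+a_2-1\}$ with $1\le a_1\le a_2$, $a_1+a_2=d$; as $d$ is an odd prime ($d>\ell\ge2$ under (1), $d>2$ under (2)), in fact $a_1<a_2$.

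The crucial first reduction is that $(x^{\ell}-1)\mid c(x)$. Under hypothesis (2) this is immediate. Under hypothesis (1) it follows because hypothesis (1) in fact implies hypothesis (2): reduction modulo $x^{\ell}-1$ carries $\mathcal{D}$ onto the length‑$\ell$ cyclic code $\mathcal{D}_0$ generated by $g_0:=\gcd(g,x^{\ell}-1)=\operatorname{rad}(g)$, and $\bar b(x)\mapsto\bar b(x)^{p^{e}}$ sends a codeword of $\mathcal{D}_0$ to a codeword of $\mathcal{D}$ of the same Hamming weight (its roots are those of $\bar b$, each of multiplicity $\ge p^{e}\ge e_i$, so it is divisible by $g$; its degree is $<p^{e}\ell=n$), so $d_H(\mathcal{D}_0)\ge d>\ell$; since $\mathcal{D}_0$ has length $\ell$ this forces $\mathcal{D}_0=\{0\}$, i.e. $(x^{\ell}-1)\mid g$, hence $(x^{\ell}-1)$ divides every codeword. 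Write $c(x)=(x^{\ell}-1)\widetilde c(x)$, so $\deg\widetilde c=\deg c-\ell$ and $\widetilde c_j=-\sum_{s\ge0}c_{j-s\ell}$; in particular $\sum_{s\ge0}c_{j-s\ell}=0$ whenever $j>\deg c-\ell$, which for $a_2\le\ell$ covers the entire second run. For each such $j$ with $c_j\ne0$ there must be a further nonzero term $c_{j-s\ell}$, $s\ge1$, and when $a_2\le\ell$ its index $j-s\ell<b$ forces it into the first run; thus every residue class mod $\ell$ met by the second run (a cyclic interval of $a_2$ residues) is met by the first run (a cyclic interval of $a_1$ residues), giving $a_2\le a_1$ — contradicting $a_1<a_2$. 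This disposes of the case $a_2\le\ell$ (equivalently $d\le 2\ell$); the genuinely harder case is $a_2>\ell$, where the required partner $c_{j-s\ell}$ may lie inside the long run and the pigeonhole above fails.

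Handling two‑run minimum‑weight codewords with a run longer than $\ell$ is, I expect, the main obstacle. The extra input is the chain‑ring structure $\mathbb{F}_q[x]/(x^{n}-1)=\mathbb{F}_q[x]/\bigl((x^{\ell}-1)^{p^{e}}\bigr)$ and the structure theory of minimum‑weight codewords of repeated‑root cyclic codes (van Lint; Castagnoli--Massey--Schoeller--von Seemann): one pushes the relation through $\widetilde c=c/(x^{\ell}-1)$, which again lies in a repeated‑root cyclic code, and uses the minimum‑distance formula $d=\min_{1\le t\le p^{e}}P_t\cdot d_H(\mathcal{C}_t)$ (with $P_t$ the least power of $p$ that is $\ge t$ and $\mathcal{C}_t$ the associated simple‑root code of length $\ell$) together with primality of $d$ to pin down the shape of minimum‑weight codewords up to a cyclic shift — a "$p$‑power spread'' $\beta(x^{p^{s}})$ of a minimum‑weight word of an auxiliary simple‑root code, possibly times a power of $x^{\ell}-1$ — whose nonzero coordinates sit at positions in an arithmetic progression of step $\ge p\ge 2$ and hence cannot form two tight runs unless $p\mid d$; in that remaining subcase primality forces $d=p$ and the two‑run pattern is excluded by inspection. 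The remaining ingredients — the identity $w_p=w_H+t$, the degree bound $\deg c\ge\deg g$, the reduction $(x^{\ell}-1)\mid c$, and the pigeonhole argument for short runs — are short.
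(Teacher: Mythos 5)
Your reduction is sound and matches the paper's: both hypotheses give $d_H<n-k$, so Theorem \ref{add2}(1) yields $d_p\geq d_H+2$; the weight-$(d_H+1)$ single-run case dies by the degree argument (Corollary \ref{corollary}); and what remains is to exclude minimum-weight codewords with exactly two cyclic runs. Your observation that hypothesis (1) forces $x^\ell-1\mid g(x)$ (via the weight-preserving map $b(x)\mapsto b(x)^{p^e}$ from the length-$\ell$ code generated by $\gcd(g,x^\ell-1)$ into $\mathcal{D}$) is correct and is essentially the same fact the paper extracts from $N_v\leq\ell<d_H$. Your pigeonhole for the case where both runs have length at most $\ell$ is also correct.

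However, there is a genuine gap exactly where you flag it: the case of a run longer than $\ell$. What you offer there is a sketch resting on a structural claim --- that minimum-weight codewords are ``$p$-power spreads'' with support in an arithmetic progression of step at least $p$, with a residual subcase $p\mid d_H$ handled ``by inspection'' --- which is both imprecise and unproved; the correct step size is $\ell$, not $p$, and the claimed trichotomy is not what Castagnoli et al.\ establish. The paper closes this uniformly (no case split on run length) as follows: writing any minimum-weight codeword as $c(x)=(x^\ell-1)^t v(x)$ with $x^\ell-1\nmid v$ and $v=\sum_{i=0}^{\ell-1}x^iv_i(x^\ell)$, Lemma 2 and Theorem 1 of Castagnoli et al.\ give $d_H=P_{\bar t}\cdot N_v$ with $N_v$ the number of nonzero $v_i$. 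Primality of $d_H$ together with hypothesis (1) (since $N_v\leq\ell<d_H$) or hypothesis (2) (since $t\geq1$ forces $P_{\bar t}\geq2$) yields $N_v=1$, so $c(x)=x^i(x^\ell-1)^tv_i(x^\ell)$ has its entire support in one residue class modulo $\ell$; as $\ell\geq2$ no two nonzero coordinates are adjacent, while a two-run codeword of weight $d_H\geq3$ must contain a run of length at least $2$. This identity $d_H=P_{\bar t}\cdot N_v$ for \emph{every} minimum-weight codeword is the missing ingredient in your argument; without it (or an equivalent), your long-run case does not go through.
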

At this point we make several remarks. The first part of
Theorem \ref{add2} extends \cite[Theorem 10]{Cassuto11} and \cite[Lemma 4.1]{Kai} in two directions: First we improve the results
by giving a necessary and sufficient condition. Second we do not require that
$\gcd(n,q)=1$.

We make a comparison between \cite[Theorem 11]{Cassuto11} and the second part of Theorem \ref{add2}.
\cite[Theorem 11]{Cassuto11} says that
if a $q$-ary $[n,k,d_H]$ simple-root  cyclic code with prime length $n$ satisfies $n-d_H\geq 2k-2$,
then  the minimum pair distance of the code is at least $d_H+3$.
The second part of Theorem \ref{add2}   removes the prime-length  constraint and the simple-root requirement;
if $n-d_H$ is odd, the conditions $n-d_H\geq 2k-1$ and $n-d_H\geq 2k-2$ coincide; otherwise,
the two conditions are equivalent to $k\leq (n-d_H)/2$ and $k\leq (n-d_H)/2+1$ respectively.

Using Theorems \ref{add2} and \ref{re-add3}, we obtain the following new MDS symbol-pair codes.

\begin{Theorem}\label{MDSs}
The following hold.
\begin{itemize}

\item[$(1)$]Let $p\geq5$ be an odd prime number.
Then there exists an MDS $(3p, 7)_p$-symbol-pair code.

\item[$(2)$]Let $p$ be an odd prime number such that $3$ is a divisor of $p-1$.
Then there exists an MDS $(3p, 8)_p$-symbol-pair code.

\item[$(3)$]Let $p\geq5$ be an odd prime number.
Then there exists an MDS $(3p, 6)_p$-symbol-pair code.

\item[$(4)$]Let $q\geq3$ be a  prime power and let $n\geq q+4$ be a divisor of $q^2-1$. Then
there exists an MDS $\big(n,6\big)_q$-symbol-pair code.
\end{itemize}
\end{Theorem}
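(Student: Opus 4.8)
The plan for parts (1)--(3) is uniform: realize the desired code as a repeated‑root cyclic code $\mathcal{D}$ of length $3p$ over $\mathbb{F}_p$ of dimension $k=3p-d_p+2$, so that $|\mathcal{D}|=p^{3p-d_p+2}$ already meets the Singleton bound (\ref{singleton}) with equ
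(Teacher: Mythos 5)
Your proposal breaks off after stating only the strategy, and that strategy --- while correct as far as it goes and identical in outline to the paper's --- contains none of the actual mathematical content. Yes, if one exhibits a cyclic code of length $3p$ and dimension $k=3p-d_p+2$ whose minimum pair distance is at least $d_p$, then the Singleton bound (\ref{singleton}) forces equality and the code is MDS. But everything difficult lies in producing such a code and, above all, in proving the lower bound on the pair distance; neither appears in what you wrote. Concretely, you would need to (i) specify generator polynomials --- the paper uses $(x-1)^3(x^2+x+1)$ for part (1), $(x-1)^3(x-\omega)^2(x-\omega^2)$ for part (2), and $(x-1)(x^3-1)$ for part (3) --- and compute the Hamming distance of each via Lemma \ref{repeated-root}; and (ii) push the pair distance two or three units above $d_H+1$, which is where all the work is. The generic bound from Theorem \ref{add2} only gives $d_p\geq d_H+2$; to reach $d_H+3$ (part (1)) or $d_H+4$ (part (2)) the paper must classify the minimum-weight codewords using the Castagnoli et al.\ decomposition $c(x)=(x^3-1)^t v(x)$ with $v=v_0(x^3)+xv_1(x^3)+x^2v_2(x^3)$, and then rule out codewords whose support splits into two (or three) runs of consecutive coordinates by evaluating $c$ and its formal derivatives at $1,\omega,\omega^2$ and deriving contradictions such as $6r=0$ in $\mathbb{F}_p$ with $p\geq 5$. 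None of this is present, and without it the claim that the constructed codes have pair distance $7$ or $8$ is unsupported.

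Part (4) is not addressed at all in your proposal, and it requires a genuinely different argument: a simple-root cyclic code of length $n\mid q^2-1$ with defining set $C_0\cup C_1\cup C_{q+1}$, the Hartmann--Tzeng bound to get $d_H\geq 4$, an argument via the $[n,k,d]$ MDS constraint $k\leq q-1$ to rule out $d_H=5$ (this is where the hypothesis $n\geq q+4$ enters), and then Theorem \ref{add2}(1) to lift $d_H=4$ to $d_p\geq 6$. You should also note that the hypotheses of parts (1)--(3) (e.g.\ $p\geq5$, or $3\mid p-1$) must be used somewhere; in the paper they are exactly what makes the contradictions in the derivative computations go through, so any complete proof has to engage with them.
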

Note that \cite[Theorem 4.3]{Kai} asserts that there exists  an MDS $\big(n,5\big)_q$-symbol-pair code
if $n>q+1$ is a divisor of $q^2-1$. The fourth  part of Theorem \ref{MDSs} shows that the minimum pair distance $5$ can be increased to $6$.

This paper is organized as follows. Basic
notations and results about  constacyclic
codes and repeated-root cyclic codes are provided in Section \ref{preliminaries}.  The proofs of  Theorems \ref{add2} and \ref{re-add3},
together with some corollaries and examples,  are presented in
Section \ref{proof-bound}. The proof of Theorem \ref{MDSs} is given in Section \ref{proof-MDS}.

\section{Preliminaries}\label{preliminaries}
In this section,    basic notations and results
about  constacyclic codes and repeated-root cyclic codes  are provided.
The result
\cite[Theorem 1]{Cast}  plays an important role in the proof of Theorems \ref{re-add3} and \ref{MDSs},
which   provides an effective way to determine  the minimum
Hamming distance of repeated-root cyclic codes.

A {\it   code} $\mathcal{C}$ of length $n$ over $\mathbb{F}_q$ is a nonempty subset of $\mathbb{F}_q^n$.
If, in addition,  $\mathcal{C}$  is a linear subspace over $\mathbb{F}_q$ of $\mathbb{F}_q^n$,
then $\mathcal{C}$ is called a {\it  linear code}.
A linear code $\mathcal{C}$ of length $n$,  dimension $k$ and minimum Hamming distance $d_H$ over $\mathbb{F}_q$ is often
called a $q$-ary $[n, k, d_H]$ code.
Given  a nonzero element $\lambda$ of $\mathbb{F}_q$,
the $\lambda$-constacyclic   shift $\tau_{\lambda}$ on $\mathbb{F}_q^n$ is the shift
$$
\tau_{\lambda}\big((x_0,x_1,\dots,x_{n-1})\big)=\big(\lambda x_{n-1},x_0,x_1,\dots,x_{n-2}\big).
$$
A linear code $\mathcal{C}$ is said to be {\it $\lambda$-constacyclic} if $\mathcal{C}$ is a $\tau_{\lambda}$-invariant subspace of $\mathbb{F}_q^n$, i.e.,   $\tau_{\lambda}(\mathcal{C})=\mathcal{C}$.
In particular, it is just
the usual {\it cyclic code} when $\lambda=1$.
In studying constacyclic codes of length $n$, it is convenient to label the coordinate positions as
$0,1,\cdots, n-1$. Since a
constacyclic code of length $n$ contains all $n$ constacyclic shifts of any codeword,
it is convenient to think
of the coordinate positions cyclically where, once you reach $n-1$, you begin again with
coordinate $0$. When we speak of consecutive coordinates, we will always mean consecutive
in that cyclical sense.

Each codeword $\mathbf{c}=(c_0,c_1,\dots,c_{n-1})\in \mathcal{C}$ is customarily identified with its polynomial representation $c(x)=c_0+c_1x+\dots+c_{n-1}x^{n-1}$.
Any  code $\mathcal{C}$ is then in turn identified with the set of all polynomial representations of its codewords.
In this way,
a linear code $\mathcal{C}$ is $\lambda$-constacyclic if and only if it is an ideal of the
quotient ring $\mathbb{F}_q[x]/\langle x^n-\lambda\rangle$ (e.g.,  see \cite{CDFL}).
It follows that  a unique  monic divisor   $g(x)\in \mathbb{F}_q[x]$ of $x^n-\lambda$ can be found  such that
$\mathcal{C}=\langle g(x)\rangle=\big\{f(x)g(x)\pmod{x^n-\lambda}\,\big{|}\,f(x)\in \mathbb{F}_q[x]\big\}$.
The polynomial $g(x)$ is called the {\it generator polynomial} of $\mathcal{C}$,
in which  case  $\mathcal{C}$ has dimension $k$ precisely when the degree of $g(x)$ is $n-k$.

Generally, constacyclic codes over finite fields can be divided into two classes:
simple-root constacyclic codes, if the code lengths are co-prime to the characteristic of the field; otherwise, we have the so-called
repeated-root constacyclic codes.
Most of studies on constacyclic codes in the literature are focused on the simple-root case, which
essentially guarantees that every root of $x^n-\lambda$ has multiplicity one.
Simple-root constacyclic codes are thus can be characterized by their defining sets (e.g., see \cite{Chen} or \cite{KZL}).
The BCH Bound and the Hartmann-Tzeng Bound  for simple-root
cyclic codes (e.g., see \cite{Huffman}) are based on consecutive sequences of roots of the generator polynomial.

In contrast to the simple-root case, repeated-root constacyclic codes  are no longer characterized by sets
of zeros.
Castagnoli {\it et al.} \cite[Theorem 1]{Cast} determined  the minimum Hamming distance  of repeated-root cyclic codes
by using polynomial algebra;
it is  showed that the minimum Hamming distance of a repeated-root cyclic code
$\mathcal{D}$ can be expressed in terms of
$d_H(\bar{\mathcal{D}_t})$, where $\bar{\mathcal{D}_t}$ are   simple-root cyclic codes fully  determined by $\mathcal{D}$.
To include  \cite[Theorem 1]{Cast}, we first introduce  the following notation.
Let $\mathcal{D}=\langle g(x)\rangle$ be a   repeated-root cyclic code of length $\ell p^e$ over $\mathbb{F}_q$,
where $\ell>1$ is a positive integer such that $\gcd(\ell,p)=1$ and $e$ is a positive integer. Suppose
$$
g(x)=\prod\limits_{i=1}^sm_i(x)^{e_i}
$$
is the factorization of $g(x)$ into distinct monic irreducible polynomials $m_i(x)\in \mathbb{F}_q[x]$ of multiplicity $e_i$.
Fix a value $t$, $0\leq t\leq p^e-1$;  $\bar{\mathcal{D}}_t$ is defined to be a (simple-root) cyclic code of length
$\ell$ over $\mathbb{F}_q$ with generator polynomial
$g_t(x)$ as the product of those irreducible factors $m_i(x)$ of $g(x)$ that occur with multiplicity $e_i>t$.
If this product turns out to be $x^\ell -1$, then
$\bar{\mathcal{D}_t}$ contains only the all-zero codeword and we set $d_H(\bar{\mathcal{D}_t})=\infty$.
If all $e_i ~(1\leq i\leq s)$ satisfy $e_i\leq t$, then, by way of convention, $g_t(x)=1$ and $d_H(\bar{\mathcal{D}_t})=1$.
The next result is an immediate consequence of  \cite[Lemma 1]{Cast} and \cite[Theorem 1]{Cast}.

\begin{lem}\label{repeated-root}
Let $\mathcal{D}=\langle g(x)\rangle$ be a   repeated-root cyclic code of length $\ell p^e$ over $\mathbb{F}_q$, where $\ell>1$ is a positive integer such that $\gcd(\ell,p)=1$ and $e$ is a positive integer.
Then
$$
d_H(\mathcal{D})=\min\big\{P_t\cdot d_H(\bar{\mathcal{D}}_t)\,\big{|}\, 0\leq t\leq p^e-1\big\}
$$
where
\begin{equation}\label{pt}
P_t=\prod_i\big(t_i+1\big)
\end{equation}
with $t_i$'s being the coefficients of the radix-$p$ expansion
of $t$.
\end{lem}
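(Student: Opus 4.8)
The plan is to obtain the formula by assembling the two cited results of Castagnoli \emph{et al.}; the entire content of the argument is a translation of their statements into the notation fixed above, which is exactly why the Lemma can be asserted as an immediate consequence. First I would extract from \cite[Theorem 1]{Cast} the following description of $d_H(\mathcal{D})$: with $g(x)=\prod_{i=1}^s m_i(x)^{e_i}$ as above,
$$
d_H(\mathcal{D})=\min_{0\le t\le p^e-1}\big(w_t\cdot d_H(\bar{\mathcal{D}}_t)\big),
$$
where $\bar{\mathcal{D}}_t$ is the simple-root cyclic code of length $\ell$ over $\mathbb{F}_q$ generated by the product of those $m_i(x)$ with $e_i>t$ (together with exactly the degenerate conventions recorded just before the statement: $d_H(\bar{\mathcal{D}}_t)=\infty$ if that product equals $x^\ell-1$, and $d_H(\bar{\mathcal{D}}_t)=1$ if every $e_i\le t$), and $w_t$ denotes the minimum Hamming weight of the length-$p^e$ cyclic code $\mathcal{E}_t=\langle(x-1)^t\rangle$ over $\mathbb{F}_q$, i.e.\ the least Hamming weight of a nonzero multiple of $(x-1)^t$ modulo $x^{p^e}-1=(x-1)^{p^e}$. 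A small bookkeeping check is needed here: that the ``derived codes'' and the ``multiplicity factor'' appearing in \cite{Cast} really do coincide with $\bar{\mathcal{D}}_t$ and $w_t$ in the present sense, including the treatment of the two extreme values of $t$.

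Second, I would invoke \cite[Lemma 1]{Cast} to evaluate the factor $w_t$: if $t=\sum_i t_ip^i$ is the radix-$p$ expansion of $t$, then $w_t=\prod_i(t_i+1)$, which is precisely $P_t$ as in $(\ref{pt})$. For a self-contained write-up I would also include the easy inequality: from $(x-1)^t=\prod_i(x^{p^i}-1)^{t_i}$ and the fact that each factor $(x^{p^i}-1)^{t_i}$ has exactly $t_i+1$ nonzero coefficients (the binomial coefficients $\binom{t_i}{j}$ with $0\le j\le t_i<p$ are nonzero in $\mathbb{F}_q$) located at the degrees $0,p^i,2p^i,\dots,t_ip^i$, all strictly below $p^{i+1}$, one sees that there is no cancellation in the product, so $\mathrm{wt}\big((x-1)^t\big)=\prod_i(t_i+1)$ and hence $w_t\le P_t$; the reverse bound $w_t\ge P_t$ is exactly the content of \cite[Lemma 1]{Cast}.

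Substituting $w_t=P_t$ into the displayed formula yields $d_H(\mathcal{D})=\min\{P_t\cdot d_H(\bar{\mathcal{D}}_t)\mid 0\le t\le p^e-1\}$ with $P_t$ as in $(\ref{pt})$, which is the assertion of the Lemma. I do not expect any genuine obstacle: the only point requiring care is the notational reconciliation described above, while everything else reduces to a one-line substitution, which is why the result is presented as an immediate consequence of the two cited statements rather than proved in detail.
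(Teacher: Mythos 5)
Your proposal is correct and matches the paper's treatment exactly: the paper offers no proof beyond the remark that the lemma is an immediate consequence of \cite[Lemma 1]{Cast} and \cite[Theorem 1]{Cast}, and your assembly of the two (Theorem 1 giving the factorization $d_H(\mathcal{D})=\min_t w_t\cdot d_H(\bar{\mathcal{D}}_t)$, Lemma 1 identifying $w_t=P_t$) is precisely that argument. The extra verification that $\mathrm{wt}\big((x-1)^t\big)=\prod_i(t_i+1)$ via the no-cancellation observation is a harmless and correct addition.
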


\section{Proofs of Theorems \ref{add2} and \ref{re-add3}}\label{proof-bound}
The proof of Theorem \ref{add2} is given below.\\
\vspace{0.01cm}\\
{\bf Proof of Theorem \ref{add2}.}
To prove  $(1)$, we first observe that
$d_p(\mathcal{C})\geq d_H+1$ since the minimum Hamming distance of $\mathcal{C}$  satisfies  $2\leq d_H<n$.
We will show that
$d_p(\mathcal{C})=d_H+1$ if and only if $k=n-d_H+1$.
To this end,
we claim that $d_p(\mathcal{C})=d_H+1$ precisely when $\mathcal{C}$ has a  codeword with Hamming weight  $d_H$  in the form
$$
\big(a_{i_1}, a_{i_2},\cdots, a_{i_d}, 0, \cdots, 0\big),
$$
 where $a_{i_j}$ are nonzero elements of $\mathbb{F}_q$
for $1\leq j\leq d$ (here, $d_H$ is denoted by $d$ for short). Indeed, it is clear that
$d_p(\mathcal{C})=d_H+1$ if and only if  there exists a  codeword  $\mathbf{c}\in \mathcal{C}$
such that $w_H(\mathbf{c})=d_H$ and the $d_H$ nonzero terms   appear with consecutive coordinates;
applying the $\lambda$-constacyclic shift a certain number of times on $\mathbf{c}$
if necessary, $\mathbf{c}$ is then converted to the form
$(a_{i_1}, \cdots, a_{i_d}, 0, \cdots, 0)$, as claimed.

If $\mathcal{C}$ is an MDS code, namely $k=n-d_H+1$,  then $d_p(\mathcal{C})=d_H+1$.
For the converse, let $H=(\mathbf{h}_1, \cdots, \mathbf{h}_n)$ be a parity-check matrix  for $\mathcal{C}$,
where $\mathbf{h}_i$ ($1\leq i\leq n$) are the columns of $H$. Suppose $d_p(\mathcal{C})=d+1$, then there exists a codeword
$\mathbf{c}=(a_{i_1}, \cdots, a_{i_d}, 0, \cdots, 0)\in \mathcal{C}$, as claimed in the   preceding paragraph.
Hence,  $a_{i_1}\mathbf{h}_1+\cdots+a_{i_d}\mathbf{h}_d=0$, which implies that
the $d$th column $\mathbf{h}_d$ lies in the $(d-1)$-dimensional subspace of $\mathbb{F}_q^{n-k}$ spanned by $\mathbf{h}_1, \mathbf{h}_2, \cdots, \mathbf{h}_{d-1}$, say
$V=\langle \mathbf{h}_1, \cdots, \mathbf{h}_{d-1}\rangle$. Using the $\lambda$-constacyclic shift on $\mathbf{c}$, it follows that
$(0,a_{i_1}, \cdots, a_{i_d}, 0, \cdots, 0)$ is also a codeword of $\mathcal{C}$.
Therefore, $a_{i_1}\mathbf{h}_2+\cdots+a_{i_d}\mathbf{h}_{d+1}=0$. This leads to $\mathbf{h}_{d+1}\in V$.
We can continue in
this fashion and eventually  obtain that the dimension of
the vector space generated by the columns of $H$ is exactly equal to $d_H-1$.
However, $H$ is a full row-rank matrix of size $(n-k)\times n$, which forces $n-k=d_H-1$.
This completes the proof of  $(1)$.

The proof of   Theorem \ref{add2}($2$) needs the following corollary.
Using essentially identical arguments to the proof Theorem \ref{add2}$(1)$, we have the following result.

\begin{Corollary}\label{corollary}
Let $\mathcal{C}$ be an $[n,k,d_H]$  constacyclic code over  $\mathbb{F}_q$ with $2\leq d_H<n$.
If $\mathcal{C}$ contains a codeword, of which the Hamming weight is  $d_H+1$,  such that
the $d_H+1$ nonzero terms   appear with consecutive coordinates,
then   $n-d_H\leq k$.
\end{Corollary}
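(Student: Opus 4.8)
The plan is to imitate, almost verbatim, the argument used for Theorem~\ref{add2}$(1)$, working with a parity-check matrix $H=(\mathbf{h}_1,\dots,\mathbf{h}_n)$ of $\mathcal{C}$, which has full row rank $n-k$. Write $d=d_H$ for brevity. By hypothesis $\mathcal{C}$ has a codeword of Hamming weight $d+1$ whose $d+1$ nonzero entries occupy consecutive coordinates in the cyclic sense; applying a suitable number of $\lambda$-constacyclic shifts $\tau_\lambda$ (which only ever multiplies the wrapped-around coordinate by the nonzero scalar $\lambda$, hence preserves the pattern of nonzero positions) we may assume this codeword has the form $\mathbf{c}=(b_1,\dots,b_{d+1},0,\dots,0)$ with every $b_j\in\mathbb{F}_q\setminus\{0\}$. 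If $d+1=n$, the desired conclusion $n-d\leq k$ reads $k\geq 1$, which is automatic, so we may assume $d+1<n$.

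First I would read off from $\mathbf{c}\in\mathcal{C}$ the linear relation $b_1\mathbf{h}_1+\cdots+b_{d+1}\mathbf{h}_{d+1}=\mathbf{0}$. Since $b_{d+1}\neq 0$, this shows that $\mathbf{h}_{d+1}$ lies in the subspace $V:=\langle \mathbf{h}_1,\dots,\mathbf{h}_d\rangle$ of $\mathbb{F}_q^{n-k}$, and $\dim V\leq d$ simply because $V$ is spanned by $d$ vectors. Next I would run the same induction as in the proof of Theorem~\ref{add2}$(1)$: applying $\tau_\lambda$ to $\mathbf{c}$ (again a codeword) yields $b_1\mathbf{h}_2+\cdots+b_{d+1}\mathbf{h}_{d+2}=\mathbf{0}$, hence $\mathbf{h}_{d+2}\in\langle \mathbf{h}_2,\dots,\mathbf{h}_{d+1}\rangle\subseteq V$; iterating, each further shift gives $\mathbf{h}_{d+1+m}\in\langle \mathbf{h}_{1+m},\dots,\mathbf{h}_{d+m}\rangle\subseteq V$, the cyclic wraparound being harmless since it only introduces a nonzero scalar factor $\lambda$. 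Thus all columns $\mathbf{h}_1,\dots,\mathbf{h}_n$ lie in $V$, so $n-k=\mathrm{rank}\,H=\dim\langle \mathbf{h}_1,\dots,\mathbf{h}_n\rangle=\dim V\leq d$, that is, $n-d_H\leq k$, as claimed.

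I do not expect a real obstacle: the only points needing a little care are that constacyclic shifting preserves the configuration of $d+1$ consecutive nonzero entries (because $\lambda\neq 0$), that the induction sweeps through all $n$ columns so that $V$ contains the entire column space of $H$, and the degenerate case $d_H+1=n$; all three are routine. Note in particular that, in contrast with the proof of Theorem~\ref{add2}$(1)$, we do not need the minimum-distance hypothesis here to determine $\dim V$ exactly---the crude estimate $\dim V\leq d$ already suffices.
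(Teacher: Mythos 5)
Your proof is correct and follows exactly the parity-check-matrix argument that the paper itself invokes for this corollary (the paper only remarks that the proof is ``essentially identical'' to that of Theorem~\ref{add2}$(1)$, which is precisely what you carry out). The details you flag --- preservation of the consecutive nonzero block under $\tau_\lambda$, the sweep through all $n$ columns, and the degenerate case $d_H+1=n$ --- are all handled correctly, and your observation that the crude bound $\dim V\leq d_H$ suffices here (no exact-dimension claim needed) is accurate.
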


Now we continue to  give the proof of   Theorem \ref{add2}$(2)$.
Since the parameters of $\mathcal{C}$ satisfy $n-d_H\geq 2k-1$,  it follows from   Theorem \ref{add2}$(1)$  that
$d_p(\mathcal{C})\geq d_H+2$.
In order to prove  $d_p(\mathcal{C})\geq d_H+3$, it suffices to show that
there are no codewords of $\mathcal{C}$ with Hamming weight $d_H+1$ such that
the $d_H+1$ nonzero terms   appear with consecutive coordinates,
and that there are no codewords of $\mathcal{C}$ with Hamming weight $d_H$
in the form
$(\mathbf{a},  \mathbf{0}_r,\mathbf{b},\mathbf{0}_s)$,
where $\mathbf{a},\mathbf{ b}$ are row vectors
with all the  entries of $\mathbf{a},\mathbf{ b}$ being nonzero, $\mathbf{0}_r$ and $\mathbf{0}_s$ are all-zero row vectors of lengths $r$ and $s$ respectively.

From  $k>1$ and $n-d_H\geq 2k-1$, we see that  $n-d_H>k$.
Using   Corollary \ref{corollary},  we are left to show that  there are no codewords of $\mathcal{C}$ with Hamming weight $d_H$
in the form
$(\mathbf{a},  \mathbf{0}_r,\mathbf{b},\mathbf{0}_s)$.
Suppose otherwise that  $\mathbf{c}=(\mathbf{a},  \mathbf{0}_r,\mathbf{b},\mathbf{0}_s)\in \mathcal{C}$
with $w_H(\mathbf{c})=d_H$. We will derive a contradiction.
Let $g(x)$ be the generator polynomial of $\mathcal{C}$. Then there exists a unique polynomial $u(x)$ with $\deg u(x)\leq k-1$ such that
$u(x)g(x)=c(x)=(\mathbf{a},  \mathbf{0}_r,\mathbf{b},\mathbf{0}_s)$.
If $s\geq k$, then  the degree of $c(x)$ is at most $n-k-1$.
This is impossible because the degree of $g(x)$ is $n-k$.
We thus conclude that $s\leq k-1$. Similar reasoning yields $r\leq k-1$.
This gives $n-d_H=r+s\leq 2k-2$, which contradicts
the hypotheses of the theorem. We are done.
\qed

\vspace{0.03cm}
We illustrate Theorem \ref{add2} in the following example.
\begin{Example}\rm
Take $q=5$ and $n=24$ in Theorem \ref{add2}. Let $\mathcal{C}$ be a cyclic code of length $24$ over
$\mathbb{F}_5$ with defining set $T=\mathbb{Z}_{24}\setminus\{0,19,23\}$.
\textsc{Magma} \cite{Magma} computations show that $\mathcal{C}$ has parameters $[24,3,19]$.
Since $24-19=5=2\times3-1$, it follows from  Theorem \ref{add2}$(2)$ that $d_p(\mathcal{C})\geq19+3=22$.
In fact, $\mathcal{C}$ has minimum pair distance $23$, which gives that $\mathcal{C}$ is an MDS
$(24,23)_5$-symbol-pair code.
\end{Example}

We now turn to the proof of Theorem \ref{re-add3}.\\
\vspace{0.01cm}\\
{\bf Proof of Theorem \ref{re-add3}.}
It follows from Theorem \ref{add2} that
$d_p(\mathcal{D})\geq d_H(\mathcal{D})+2$.
By $d_H(\mathcal{D})<\ell p^e-k$ again,
Corollary \ref{corollary}  ensures that
there are no  codewords of $\mathcal{D}$ with Hamming weight $d_H+1$ such that
the $d_H+1$ nonzero terms   appear with consecutive coordinates.
Therefore, it remains to show that
there are no codewords of $\mathcal{D}$ with  Hamming weight   $d_H$
in the form
\begin{equation}\label{forms}
\big(\mathbf{a},  \mathbf{0}_r,\mathbf{b},\mathbf{0}_s\big)
\end{equation}
where $\mathbf{a},\mathbf{ b}$ are row vectors
with all the  entries of $\mathbf{a},\mathbf{ b}$ being nonzero, $\mathbf{0}_r$ and $\mathbf{0}_s$ are all-zero row vectors of lengths $r$ and $s$ respectively.

To this end,
we first analyze the nonzero codewords of $\mathcal{D}$ by using \cite[Lemma 2]{Cast}.
Let $c(x)\in \mathcal{D}$   be an arbitrary nonzero codeword of degree at most $\ell p^e-1$.
Write $c(x)$ as $c(x)=(x^\ell-1)^tv(x)$, where $0\leq t\leq p^e-1$ and $x^\ell-1$ is not a divisor of $v(x)$,
and write $v(x)$ in the form
\begin{equation}\label{Nvv}
v(x)=v_0(x^\ell)+xv_1(x^\ell)+\cdots+x^{\ell-1}v_{\ell-1}(x^\ell).
\end{equation}
\cite[Lemma 2]{Cast} says that
$$
c_{\bar t}(x)=(x^\ell-1)^{\bar t}\bar{v}(x)^{p^e}\pmod{x^{\ell p^e}-1},
$$
where $\bar{v}(x)\equiv v(x)\pmod{x^\ell-1}$ and $\bar{t}=\min\{\bar {t}\in T\,|\, \bar{t}\geq t\}$
(The elements of $T$ are nonnegative integers; for the definition of $T$, the reader may refer to \cite{Cast}),
is also a nonzero codeword of $\mathcal{D}$  satisfying  $w_H(c_{\bar t}(x))\leq w_H(c(x))$.

Now choosing  $c(x)$ to be any codeword of $\mathcal{D}$ with Hamming weight $d_H$,
\cite[Lemma 2]{Cast} and \cite[Theorem 1]{Cast} together with their proofs  tell us more:
$$
d_H=w_H\big(c(x)\big)=w_H\big(c_{\bar t}(x)\big)=P_{\bar t}\cdot N_v,
$$
where $N_v$ is the number of nonzero $v_i(x^\ell)$'s in (\ref{Nvv})
and $P_{\bar t}$ is a positive integer defined in (\ref{pt}).
These facts yield
$d_H=P_{\bar t}$ or  $d_H=N_v$, with our assumption that $d_H$ is a prime number.
If $(1)$ holds,   we have   $N_v=1$ since $N_v\leq \ell$; if  $(2)$ holds,
it follows from $t\geq1$ that $\bar t\geq1$, and thus $P_{\bar t}\geq2$ which forces $N_v=1$.  In conclusion,
$c(x)$ must be one of the following forms:
\begin{equation*}
c(x)=x^i(x^\ell-1)^tv_{i}(x^\ell)~~\hbox{for~some~$0\leq i\leq \ell-1$}.
\end{equation*}
Expanding $c(x)$ and using the fact that the degree of $c(x)$ is  at most $\ell p^e-1$, it follows from
$d_H(\mathcal{D})\geq3$
that $c(x)$ cannot have the form (\ref{forms}).
This completes the proof.
\qed

We give two examples to illustrate Theorem \ref{re-add3}.

\begin{Example}\label{example2}\rm
Take $\ell=3$, $p=5$ and $e=1$ in Theorem \ref{re-add3}.
Let $\mathcal{D}$ be a repeated-root cyclic code of length $15$ over $\mathbb{F}_5$ with
generator polynomial $(x-1)(x^3-1)$. By Lemma \ref{repeated-root}, we see that $\mathcal{D}$
has parameters $[15,11,3]$.
It is readily checked that the conditions of Theorem \ref{re-add3} are satisfied, and thus  the minimum pair distance of $\mathcal{D}$
is at least $6$. Now the Singleton Bound for symbol-pair codes (\ref{singleton}) gives that $\mathcal{D}$ is an MDS $(15,6)_5$-symbol-pair code.
\end{Example}

Example \ref{example2} suggests an infinite  family of MDS symbol-pair codes with minimum pair distance six  as we show below.

\begin{Corollary}\label{cor2}
Let $p\geq5$ be an odd prime number.
Then there exists an MDS $(3p, 6)_p$-symbol-pair code.
\end{Corollary}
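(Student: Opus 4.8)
The plan is to write down the code explicitly, extending the construction of Example \ref{example2} from $p=5$ to every prime $p\geq 5$. I would take $\ell=3$, $e=1$, $q=p$, and let $\mathcal{D}=\langle g(x)\rangle$ be the repeated-root cyclic code of length $n=3p$ over $\mathbb{F}_p$ with generator polynomial $g(x)=(x-1)(x^3-1)$. Since $p\geq 5$ is prime we have $\gcd(3,p)=1$, and in $\mathbb{F}_p[x]$ one has $x^{3p}-1=(x^3-1)^p$, so $g(x)\mid x^{3p}-1$ and $\mathcal{D}$ is a nonzero proper cyclic code of dimension $k=n-\deg g(x)=3p-4$. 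It then remains to check three things: that $d_H(\mathcal{D})=3$, that $d_p(\mathcal{D})\geq 6$, and that this already forces $\mathcal{D}$ to be MDS.

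For the Hamming distance I would invoke Lemma \ref{repeated-root} with the factorization $g(x)=(x-1)^2(x^2+x+1)$. The one point requiring care is that $x^2+x+1$ is irreducible over $\mathbb{F}_p$ exactly when $3$ does not divide $p-1$, and splits into two distinct linear factors otherwise; however, in both cases the multiplicity pattern is the same, namely $x-1$ occurs in $g(x)$ with multiplicity $2$ and every other monic irreducible factor with multiplicity $1$, and only this pattern enters Lemma \ref{repeated-root}. Since $e=1$ we have $P_t=t+1$. For $t=0$ the polynomial $g_0(x)$ is the radical of $g(x)$, which is $x^3-1=x^{\ell}-1$, so $d_H(\bar{\mathcal{D}}_0)=\infty$. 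For $t=1$, $g_1(x)=x-1$ and $\bar{\mathcal{D}}_1$ is the $[3,2]$ cyclic code $\langle x-1\rangle$ over $\mathbb{F}_p$, which has minimum Hamming distance $2$ (it contains the weight-$2$ word $x-1$ but no weight-$1$ word, since no nonzero $cx^i$ vanishes at $x=1$), contributing $P_1\cdot d_H(\bar{\mathcal{D}}_1)=2\cdot 2=4$. For $t\geq 2$ no irreducible factor of $g(x)$ has multiplicity exceeding $t$, so $g_t(x)=1$, $d_H(\bar{\mathcal{D}}_t)=1$, and the contribution is $t+1\geq 3$. Taking the minimum over $0\leq t\leq p-1$ yields $d_H(\mathcal{D})=3$, so $\mathcal{D}$ is a $[3p,\,3p-4,\,3]$ code.

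To bound the pair distance I would apply Theorem \ref{re-add3} through its condition $(2)$: here $d_H(\mathcal{D})=3$ is prime, $x^{\ell}-1=x^3-1$ visibly divides $g(x)=(x-1)(x^3-1)$, and $2<3<\ell p^e-k=3p-(3p-4)=4$, so the hypotheses are met and $d_p(\mathcal{D})\geq d_H(\mathcal{D})+3=6$. I expect the only genuinely delicate step to be noticing that condition $(1)$ of Theorem \ref{re-add3} is \emph{not} available here, since it would require $\ell<d_H(\mathcal{D})$, i.e. $3<3$; this is exactly why the divisibility $x^{\ell}-1\mid g(x)$ was built into the choice of $g(x)$. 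Finally, $\mathcal{D}$ has $|\mathcal{D}|=p^{3p-4}=p^{\,n-6+2}$ codewords, so if $d_p(\mathcal{D})$ were at least $7$ the Singleton Bound (\ref{singleton}) would give $|\mathcal{D}|\leq p^{\,n-7+2}=p^{3p-5}$, a contradiction; hence $d_p(\mathcal{D})=6$ and equality holds in (\ref{singleton}), so $\mathcal{D}$ is an MDS $(3p,6)_p$-symbol-pair code. Apart from the Lemma \ref{repeated-root} bookkeeping, every step is a one-line numerical verification.
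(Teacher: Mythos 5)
Your proposal is correct and follows exactly the paper's own route: the same code $\mathcal{D}=\langle (x-1)(x^3-1)\rangle$ of length $3p$, the same use of Lemma \ref{repeated-root} to get parameters $[3p,3p-4,3]$, and the same appeal to Theorem \ref{re-add3} (via its condition $(2)$) plus the Singleton Bound to conclude MDS-ness. You have merely written out the bookkeeping that the paper leaves implicit, and all of it checks out.
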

\begin{proof}
Let $\mathcal{D}$ be a repeated-root cyclic code of length $3p$ over $\mathbb{F}_p$ with generator polynomial
$(x-1)(x^3-1)$. Using Lemma \ref{repeated-root},  we see that  $d_H(\mathcal{D})=3$, and so $\mathcal{D}$  has parameters
$[3p,3p-4,3]$. Now the desired result follows from Theorem \ref{re-add3}.
\end{proof}

\begin{Example}\rm
Take $\ell=3$, $p=7$ and $e=1$ in Theorem \ref{re-add3}.
Let $\mathcal{D}$ be a repeated-root cyclic code of length $21$ over $\mathbb{F}_7$ with
generator polynomial $(x-1)^4(x-2)^2(x-4)$. Using Lemma \ref{repeated-root}, it is easy to see that $\mathcal{D}$
has parameters $[21,14,5]$.
The conditions of Theorem \ref{re-add3} are satisfied, and thus  the pair distance of $\mathcal{D}$
is at least $8$.   \textsc{Magma} \cite{Magma} computations show that $(6,4,1,1,\mathbf{0}_6,3,6,\mathbf{0}_9)$, where $\mathbf{0}_6$ and $\mathbf{0}_9$
denote   respectively  all-zero row vectors of length $6$ and $9$, is a codeword of $\mathcal{D}$.
Therefore, the true minimum pair distance of $\mathcal{D}$ is $8$.
\end{Example}

\section{Proof of Theorem \ref{MDSs}}\label{proof-MDS}
The proof of Theorem \ref{MDSs} is presented as follows.\\
\vspace{0.01cm}\\
{\bf Proof of Theorem \ref{MDSs}.}
$(1).$
Let $\mathcal{D}$ be a  cyclic code of length $3p$ over $\mathbb{F}_p$ with generator polynomial
$g(x)=(x-1)^3(x^2+x+1)$.
Using Lemma \ref{repeated-root},  we have that
$\mathcal{D}$ is a cyclic code over $\mathbb{F}_p$ with parameters $[3p, 3p-5, 4]$.
Theorem \ref{add2} gives $d_p(\mathcal{D})\geq6$ and Corollary \ref{corollary} implies that
there are no codewords of $\mathcal{D}$ with Hamming weight $5$ such that
the $5$ nonzero terms   appear with consecutive coordinates.
We are left to show that
there are no codewords of $\mathcal{D}$ with  Hamming weight   $4$
in the form
\begin{equation}\label{formform}
\big(\mathbf{a},  \mathbf{0}_u,\mathbf{b},\mathbf{0}_v\big)
\end{equation}
where $\mathbf{a},\mathbf{ b}$ are row vectors
with all the  entries of $\mathbf{a},\mathbf{ b}$ being nonzero, $\mathbf{0}_u$ and $\mathbf{0}_v$ are all-zero row vectors of lengths $u$ and $v$ respectively.
Let $c(x)$   be a minimum Hamming  weight  codeword of $\mathcal{D}$ of degree at most $3p-1$.
Write $c(x)$ as $c(x)=(x^3-1)^tv(x)$ where $0\leq t\leq p-1$ and $x^3-1$ is not a divisor of $v(x)$,
and write $v(x)$ in the form
\begin{equation}\label{Nvvv}
v(x)=v_0(x^3)+xv_1(x^3)+x^{2}v_{2}(x^3).
\end{equation}
Since $x^3-1$ is a divisor of the generator polynomial $g(x)$, we have $t\geq1$.
As pointed out in the proof of Theorem \ref{re-add3}, the following equalities  hold:
\begin{equation}\label{need}
4=w_H\big(c(x)\big)=(1+t)\cdot N_v,
\end{equation}
where $N_v$ is the number of nonzero $v_i(x^3)$'s in (\ref{Nvvv}).
There are two possible values for $N_v$: If
$t=3$, then $N_v=1$; if $t=1$, then $N_v=2$.
The case $N_v=1$ clearly implies that $c(x)$ cannot be in the form (\ref{formform}). Thus we only need to
consider the case $N_v=2$ and $t=1$.
Assume to the contrary that $c(x)=(x^3-1)v(x)$ is a minimum Hamming weight codeword of $\mathcal{D}$
in the form (\ref{formform}).
Without loss of generality we may suppose that the first coordinate of $c(x)$ is $1$.
There are two cases:

Case 1: $v(x)=v_0(x^3)+xv_1(x^3)$. The forms of  $(x^3-1)v_0(x^3)$ and $x(x^3-1)v_1(x^3)$ can be illustrated by the following table:
$$\begin{array}{c|ccc:ccc:ccc:c:cccc}
\hline
(x^3-1)v_0(x^3)&  1 & 0 & 0 & \Box & 0 & 0 & \Box & 0 & 0 &\cdots& \Box & 0 & 0\\
x(x^3-1)v_1(x^3)&  0 & \Box& 0 & 0 & \Box & 0 & 0 & \Box & 0 &\cdots&0 & \Box & 0\\
\hline
\end{array}
$$
where the symbol $\Box$ marks the possible nonzero terms.
To ensure that $c(x)$ is in the form (\ref{formform}), the Hamming weight of $(x^3-1)v_0(x^3)$ must be equal to $2$
and the coefficient of $x$ in the expansion of $x(x^3-1)v_1(x)$ must be nonzero.
Therefore, a positive integer $r$ with $1\leq r\leq p-1$ and three nonzero elements $a_1, a_2, a_3$ of $\mathbb{F}_p$ can be found such that
$$
c(x)=1+a_1x+a_2x^{3r}+a_3x^{3r+1}.
$$
With $c(1)=c(\omega)=c(\omega^2)=0$, we have $a_3=-a_1$ and $a_2=-1$.
On the other hand,  the first and the second formal derivative of $c(x)$ respectively gives
$$c^{(1)}(x)=a_1-3rx^{3r-1}-(3r+1)a_1x^{3r}$$
and
$$c^{(2)}(x)=-3r(3r-1)x^{3r-2}-3r(3r+1)a_1x^{3r-1}.$$
Since $(x-1)^3$ is a divisor of $c(x)$, it follows from $c^{(1)}(1)=c^{(2)}(1)=0$ that
$a_1=-1$ and $6r=0$. This is a contradiction, for $p\geq5$ is an odd prime number and $1\leq r\leq p-1$.

Case 2: $v(x)=v_0(x^3)+x^2v_2(x^3)$.
As in the previous case,
a positive integer $r$ with $1\leq r\leq p-1$ and three nonzero elements $a_1, a_2, a_3$ of $\mathbb{F}_p$ can be found such that
$$
c(x)=1+a_1x^{3r-1}+a_2x^{3r}+a_3x^{3p-1}.
$$
With  arguments similar to  the previous case,   we have   $6r=0$, a contradiction again.
This completes the proof of the first part of Theorem \ref{MDSs}.

$(2).$
Let $\mathcal{D}$ be a repeated-root cyclic code of length $3p$ over $\mathbb{F}_p$
with generator polynomial $(x-1)^3(x-\omega)^2(x-\omega^2)$.
Using Lemma \ref{repeated-root},  we have that
$\mathcal{D}$ is a cyclic code over $\mathbb{F}_p$ with parameters $[3p, 3p-6, 4]$.
Theorem \ref{add2} gives $d_p(\mathcal{D})\geq6$. Using techniques similar to
those used in the proof of Theorem \ref{add2}, we see that
there are no codewords of $\mathcal{D}$ with Hamming weight $5$ (resp. $6$) such that
the $5$ (resp. $6$) nonzero terms   appear with consecutive coordinates.

The proof will be completed in three steps.

Step 1.
There are no codewords of $\mathcal{D}$ with  Hamming weight   $4$
in the form
\begin{equation*}
\big(\mathbf{a},  \mathbf{0}_u,\mathbf{b},\mathbf{0}_v\big)
\end{equation*}
where $\mathbf{a},\mathbf{ b}$ are row vectors
with all the  entries of $\mathbf{a},\mathbf{ b}$ being nonzero, $\mathbf{0}_u$ and $\mathbf{0}_v$ are all-zero row vectors of lengths $u$ and $v$ respectively. It is trivial to see that this holds by  the same arguments as in the proof of $(1)$.

Step 2.
There are no codewords of $\mathcal{D}$ with  Hamming weight   $4$
in the form
\begin{equation}\label{formformform}
\big(\mathbf{a},  \mathbf{0}_u,\mathbf{b},\mathbf{0}_v,\mathbf{c},\mathbf{0}_w\big).
\end{equation}
where $\mathbf{a},\mathbf{ b}$ and $\mathbf{c}$ are row vectors
with all the  entries of $\mathbf{a},\mathbf{ b}$ and $\mathbf{c}$ being nonzero, $\mathbf{0}_u$,    $\mathbf{0}_v$ and $\mathbf{0}_w$ are all-zero row vectors of lengths $u$,  $v$ and $w$ respectively.
Assume to the contrary that $c(x)=(x^3-1)v(x)$ is a minimum Hamming weight codeword of $\mathcal{D}$
in the form (\ref{formformform}).
Without loss of generality we may suppose that the first coordinate of $c(x)$ is $1$.
At this point, we arrive at (\ref{need}) again.
There are two possible values for $N_v$: If
$t=3$, then $N_v=1$; if $t=1$, then $N_v=2$.
Clearly,  $c(x)$ cannot be in the form (\ref{formformform}) if  $N_v=1$.
We are left to consider the case $N_v=2$ and $t=1$. We now consider three cases separately.

Case 1. $c(x)=1+a_1x+a_2x^{3r}+a_3x^{3s}$, where   $r,s$
are positive integers  with   $1\leq r\neq s\leq p-1$ and  $a_1, a_2, a_3$ are nonzero elements of $\mathbb{F}_p$.
With $c(1)=c(\omega)=0$, we have
$$
1+a_1+a_2+a_3=0~~\hbox{and}~~1+a_1\omega+a_2+a_3=0,
$$
which forces $a_1=0$, a contradiction.

Case 2. $c(x)=1+a_1x+a_2x^{3r+1}+a_3x^{3s+1}$, where   $r,s$
are positive integers  with   $1\leq r\neq s\leq p-1$ and  $a_1, a_2, a_3$ are nonzero elements of $\mathbb{F}_p$.
It follows from $c(1)=c(\omega)$ that
$$
1+a_1+a_2+a_3=0
$$
and
$$
1+a_1\omega+a_2\omega+a_3\omega=0.
$$
This is impossible.

Case 3. $c(x)=1+a_1x+a_2x^{3r}+a_3x^{3s+1}$, where   $r,s$
are positive integers  with  $1\leq r\neq s\leq p-1$ and  $a_1, a_2, a_3$ are nonzero elements of $\mathbb{F}_p$.
With  $c^{(1)}(1)=c^{(1)}(\omega)=0$, we have
$$
a_1+3ra_2+(3s+1)a_3=0
$$
and
$$
a_1+3r\omega^2a_2+(3s+1)a_3=0,
$$
a contradiction.

Step 3.
There are no codewords of $\mathcal{D}$ with  Hamming weight   $5$
in the form
\begin{equation*}
\big(\mathbf{a},  \mathbf{0}_u,\mathbf{b},\mathbf{0}_v\big)
\end{equation*}
where $\mathbf{a},\mathbf{ b}$ are row vectors
with all the  entries of $\mathbf{a},\mathbf{ b}$ being nonzero, $\mathbf{0}_u$ and $\mathbf{0}_v$ are all-zero row vectors of lengths $u$ and $v$ respectively. It is easy to see that this case holds.

This completes the proof of the second part of Theorem \ref{MDSs}.

$(3).$
This has been done in Corollary \ref{cor2}.

$(4).$
By our assumption $q^2\equiv1\pmod{n}$, every $q$-cyclotomic coset modulo $n$ has size one or two.
Clearly, the congruence  $q(q+1)\equiv q+1\pmod{n}$  implies that the $q$-cyclotomic coset containing $q+1$, denoted by  $C_{q+1}$,
has exactly one element.
Let  $\mathcal{C}$ be a cyclic code of length $n$ over $\mathbb{F}_q$ with defining set
$T=C_0\bigcup C_1\bigcup C_{q+1}$, where $C_0=\{0\}$, $C_1=\{1,q\}$ and $C_{q+1}=\{q+1\}$.
It is easy to see that $\mathcal{C}$ has   dimension $k=n-4$.

We will show that the actual value of $d_H(\mathcal{C})$
is $4$ by using the Hartmann-Tzeng Bound (see  \cite[Theorem 4.5.6]{Huffman}).
Indeed, applying the Hartmann-Tzeng Bound with $A=\{0,1\}$ and $B=\{0,q\}$
(since $\gcd(q,n)=1$), we obtain $d_H(\mathcal{C})\geq3+1=4$.
On the other hand, it follows from the Singleton Bound
(see  \cite[Theorem 2.4.1]{Huffman}) that $d_H(\mathcal{C})\leq n-(n-4)+1=5$.
If the Singleton Bound were met, i.e., $\mathcal{C}$ is an MDS code with parameters $[n,n-4,5]$,
applying    \cite[Corollary 7.4.4]{Huffman}  to $\mathcal{C}$ would give $k=n-4\leq q-1$.
This is a contradiction since we are assuming that $n-4\geq q$.
We conclude that $\mathcal{C}$ is an almost MDS cyclic code over $\mathbb{F}_q$  with parameters
$[n,n-4,4]$. The desired result then follows  immediately from Theorem \ref{add2}.
\qed

\end{document}